\documentclass[aps,prl,twocolumn, 10pt]{revtex4-2}
\usepackage[T1]{fontenc}
\usepackage{orcidlink}
\usepackage{amsmath,amssymb,amsthm}
\usepackage{bm}
\usepackage{mathrsfs}
\usepackage{xcolor}
\usepackage{hyperref}
\usepackage{nicefrac}
\usepackage{tikz}
\usetikzlibrary{arrows.meta,positioning,calc}
\newcommand{\Hp}{\mathscr H_{+}}
\newcommand{\Hm}{\mathscr H_{-}}
\newcommand{\Hd}{\mathscr H}
\newcommand{\Herm}{\mathfrak{herm}}

\newtheorem{theorem}{Theorem}
\newtheorem{lemma}[theorem]{Lemma}
\newtheorem{corollary}[theorem]{Corollary}
\theoremstyle{definition}
\newtheorem{definition}{Definition}
\begin{document}
\title{The Dirac Information Carrier for Relativistic Quantum Computation}
\author{Barry C. Sanders\,\orcidlink{0000-0002-8326-8912}}
\affiliation{Institute for Quantum Science and Technology, University of Calgary, Alberta T2N~1N4, Canada}
\begin{abstract}
Quantum computation has traditionally been formulated by postulating abstract
information carriers and subsequently identifying physical systems that realize
them. We adopt the opposite viewpoint and ask what computational structure
is supplied by a fundamental relativistic quantum system itself. 
Focusing on
the simplest nontrivial massive spin carrier, spin-$\nicefrac12$, we show that
its relativistic description supplies a native four-dimensional information
carrier through the Dirac equation.
The resulting Dirac information carrier
possesses an intrinsic positive- and negative-energy decomposition that induces
a physics-constrained computational structure comprising sector-preserving
and sector-coupling quantum logic.
Restricting the relativistic computational structure to the positive-energy
sector and taking the nonrelativistic regime recovers the familiar Pauli qubit
description.
We distinguish mathematical unitary transformations from physically admissible
gates and show how charge-conjugation structure, charge superselection, and
associated reference-frame resources can constrain quantum logic beyond the
sector-preserving structure.
Finally, under appropriate physical assumptions, we establish the conditions
under which this structure supports full single-carrier controllability.
Our work establishes the Dirac information carrier as the massive
spin-$\nicefrac12$ instance of a broader programme in which computational
structures are derived from the relativistic representation carried by the
underlying physical system.
\end{abstract}
\maketitle
Quantum computation is conventionally formulated by first postulating an
abstract computational primitive, such as a qubit or qudit, and then
identifying a physical system in which that primitive can be encoded and
controlled~\cite{NielsenChuang2010,WangHuSandersKais2020}. This separation
between computational abstraction and physical realization has enabled quantum
information processing to develop largely independently of the underlying
physical theory. A complementary first-principles question is whether the
physical theory itself can determine the native information carrier and the
computational structure acting upon it.

Relativistic quantum information typically asks how motion, reference frames,
and quantum fields modify the description and processing of quantum
information~\cite{PeresTerno2004}. Here we ask a different question: given a
massive relativistic quantum system characterized by its mass and spin, what
information-bearing structure follows from the corresponding relativistic
representation, and what quantum logic does that structure support? We address
this question for a free massive spin-$\nicefrac12$ system. In this case the
Dirac equation provides the covariant one-particle spinor description, and its
internal structure induces the computational structure studied below.

For a massive spin-$\nicefrac12$ particle, the relevant relativistic structure
is not selected by an algebraic analogy with the Pauli matrices but by the
physical representation carried by the system. The Dirac equation provides the
standard first-order covariant realization of this spinor structure and
necessarily introduces a four-component wavefunction together with positive-
and negative-energy solution sectors~\cite{Dirac1928}.
The Klein--Gordon
mass-shell relation~\cite{Gordon1926}, by contrast, does not by itself determine the internal
spin representation: fields of different spin can satisfy the corresponding
second-order mass-shell equation once the appropriate representation and
constraints are supplied~\cite{BargmannWigner1948}.
Our construction therefore begins with the massive
spin-$\nicefrac12$ Dirac carrier because that is the physical system under
consideration, not because the Dirac matrices happen to furnish a convenient
four-dimensional algebra. Unlike an abstract ququart or other four-level
qudit~\cite{WangHuSandersKais2020}, the resulting carrier possesses an
intrinsic positive- and negative-energy decomposition dictated by the
relativistic theory, which in turn determines the computational structure
developed here.

We identify the \emph{Dirac information carrier} as the native computational
primitive associated with the massive spin-$\nicefrac12$ relativistic system
considered here.
The one-particle Dirac Hilbert space can be resolved into momentum fibres, with each momentum~$\bm p$ associated with a four-dimensional internal spinor space~$\mathscr{H}_{\bm{p}}$.
An exactly sharp momentum is understood in the usual generalized-eigenstate sense~\cite{deLaMadrid2005}, whereas physical states are normalizable wavepackets over these fibres.
At each momentum, the Dirac Hamiltonian induces the intrinsic decomposition
\begin{equation}
\label{eq:decomposition}
\mathscr H_{\bm p}
=
\mathscr H_{+,\bm p}\oplus\mathscr H_{-,\bm p}
\end{equation}
into positive- and negative-energy sectors.
Henceforth we work within a fixed momentum fibre and suppress the momentum
label, writing~$\Hd=\Hp\oplus\Hm$.
The quantum logic developed below acts on this internal spinor space and is
therefore formulated fibrewise.

For physical wavepackets, this description
applies when the implemented control acts fibrewise, or as an effective
internal transformation over the momentum support of the state; interactions
that substantially couple different momentum fibres require treatment on the
full one-particle Hilbert space.
This structure is inherited from the
relativistic theory itself rather than chosen as an encoding for quantum
information processing. The resulting formulation distinguishes three
conceptual levels: the underlying relativistic physics described by the Dirac
equation, the information carrier determined by its four-component spinor
structure, and the computational structure induced by that carrier.
This Letter develops these three levels in turn, derives the computational
structure implied by the Dirac information carrier, establishes its
computational consequences, and finally examines the physical admissibility of
the resulting quantum logic.

The preceding discussion identified the Dirac information carrier and
previewed its intrinsic structure. We now formulate these notions precisely.
\begin{definition}
For the massive spin-\nicefrac12 system considered here,
a \emph{Dirac information carrier} is a quantum information-bearing free
massive spin-$\nicefrac12$ relativistic system whose one-particle states are
described by the Dirac equation.
Its computational realization within a
momentum fibre is the \emph{Dirac ququart}, namely the four-dimensional
internal spinor space endowed with the additional algebraic and physical
structure inherited from relativistic quantum mechanics.
\end{definition}
\noindent
For a free massive spin-$\nicefrac12$ particle, the Dirac equation is
\begin{equation}
\left(\text{i}\gamma^\mu\partial_\mu-m\right)\psi=0,\,
\{\gamma^\mu,\gamma^\nu\}
=2\eta^{\mu\nu}\mathbb I,\,
\hbar\equiv1\equiv\text{c},
\label{eq:Dirac}
\end{equation}
with Minkowski metric
$\eta=\operatorname{diag}(1,-1,-1,-1)$~\cite{Dirac1928,Thaller1992}.
Within the fixed momentum fibre, the Dirac Hamiltonian
\begin{equation}
H_{\rm D}(\bm p)=\bm\alpha\cdot\bm p+\beta m,\,
\alpha^j=\gamma^0\gamma^j,\,
\beta=\gamma^0
\end{equation}
satisfies
\begin{equation}
\label{eq:DiracHamiltonian}
H_{\rm D}(\bm p)^2=E_{\bm p}^2\mathbb I,\,
E_{\bm p}=\sqrt{\bm p^2+m^2}.
\end{equation}
Its spectral projectors
\begin{equation}
\label{eq:spectral-projectors}
\Pi_\pm(\bm p)
=\frac12\left(
\mathbb I\pm\frac{H_{\rm D}(\bm p)}{E_{\bm p}}
\right)
\end{equation}
define the two-dimensional positive- and negative-energy subspaces
$\mathscr{H}_\pm
=\operatorname{im}\Pi_\pm(\bm p)$, yielding the intrinsic
decomposition~\eqref{eq:decomposition}~\cite{Thaller1992}.
This decomposition is inherited from the relativistic theory itself and is
the physical origin of the computational structure developed below.

The positive- and negative-energy subspaces introduced here are spectral
subspaces of the first-quantized Dirac Hamiltonian.
For a fundamental Dirac
field, they should not be interpreted as physical particle states of positive
and negative energy: after second quantization, the negative-frequency branch
of the Dirac solution space is instead associated with positive-energy
antiparticle excitations~\cite{BjorkenDrell1965}.
Thus, the decomposition~\eqref{eq:decomposition}
defines the intrinsic $2+2$ structure of the Dirac solution space, whereas its
particle--antiparticle interpretation belongs to the corresponding quantized
field theory. The physical realization of transformations coupling these
sectors is considered below.

The relativistic transformation properties of the Dirac information carrier
are determined by the spin representation of $\operatorname{Spin}^{+}(1,3)$,
the double cover of the proper orthochronous
Lorentz group~\cite{LawsonMichelsohn1989,Thaller1992},
which determines how its
solutions transform between inertial reference frames. Once the Dirac spinor space is regarded as a quantum information carrier,
standard quantum kinematics assigns its traceless Hermitian endomorphisms as
the mathematical generators of single-carrier unitary control~\cite{NielsenChuang2010,DAlessandro2008}; which of these generators can be
realized physically is a separate question addressed below.
Writing $\sigma^{\mu\nu}:=\text{i}[\gamma^\mu,\gamma^\nu]/2$ and
$\gamma^5:=\text{i}\gamma^0\gamma^1\gamma^2\gamma^3$, and adopting the
standard Hermiticity convention
$(\gamma^0)^\dagger=\gamma^0$ and
$(\gamma^j)^\dagger=-\gamma^j$, the complexified spacetime Clifford algebra
provides the following basis of the fifteen-dimensional real vector space of
traceless Hermitian endomorphisms of~$\Hd$~\cite{Lounesto2001}:
\begin{equation}
\mathcal G
=\left\{
\gamma^0,\;
\text{i}\gamma^j,\;
\text{i}\sigma^{0j},\;
\sigma^{jk},\;
\gamma^5,\;
\text{i}\gamma^5\gamma^0,\;
\gamma^5\gamma^j\right\}
\end{equation}
for $j,k\in\{1,2,3\}$
with $j>k$.

Writing~$\mathcal G=\{G_a\}_{a=1}^{15}$, every traceless Hermitian
operator on the Dirac information carrier has the expansion
\begin{equation}
\label{eq:Hexpansion}
H=\sum_{a=1}^{15}\theta_aG_a,\qquad
\theta_a\in\mathbb R .
\end{equation}
As~$-\text{i}H\in\mathfrak{su}(4)$, exponentiation yields the unitary
computational transformation
\begin{equation}
\label{eq:unitaryquantumlogic}
U=\exp(-\text{i}H)\in\text{SU}(4).
\end{equation}

Together with~$\mathbb I$, the fifteen matrices in~$\mathcal G$ form a basis
of the full complex endomorphism algebra of the Dirac ququart, consistent with
\begin{equation}
\label{eq:complexifiedspacetimeClifford}
\mathbb C\otimes\mathrm{Cl}_{1,3}
\cong
M_4(\mathbb C)
\cong
\operatorname{End}(\Hd).
\end{equation}
Thus the complexified spacetime Clifford algebra intrinsic to the Dirac theory
furnishes a natural complete basis for the matrix algebra acting on the Dirac
information carrier, whereas the traceless Hermitian subspace generates
$\mathrm{SU}(4)$ after multiplication by~$-\text{i}$ and exponentiation~\cite{Lounesto2001}.
Relativistic covariance
and quantum logic therefore arise from distinct structures acting on the same
underlying vector space.
The former is governed by the spin representation of
the Lorentz group, whereas the latter is generated by traceless Hermitian endomorphisms
whose multiplication by~$-\text{i}$ and exponentiation yield the computational symmetry group~$\text{SU}(4)$.

The passage from relativistic covariance to quantum logic consequently does
not identify the Lorentz group with~$\text{SU}(4)$; instead, it proceeds
through the traceless Hermitian endomorphisms acting on the Dirac information carrier.
Accordingly, we define
\begin{definition}
Let
\begin{equation}
\label{eq:Herm0}
\Herm_0(\Hd)
:=
\left\{
H\in\operatorname{End}(\Hd):
H^\dagger=H,\,
\operatorname{tr}H=0
\right\}
\end{equation}
denote the real vector space of traceless Hermitian endomorphisms acting on the
Dirac information carrier. Multiplication by~$-\text{i}$ identifies
$\Herm_0(\Hd)$ with~$\mathfrak{su}(4)$, and exponentiation
$H\longmapsto\exp\left(-\text{i}H\right)$
produces unitary computational transformations in~$\text{SU}(4)$.
\end{definition}
\noindent
Denoting the Dirac information carrier by~$\mathsf D$, the construction is
summarized schematically by
\begin{equation}
\operatorname{Spin}^{+}(1,3)
\curvearrowright
\mathsf D
\longmapsto
\Herm_0(\Hd)
\xrightarrow{\exp(-\text{i}\,\cdot)}
\text{SU}(4),
\label{eq:covariance-to-logic}
\end{equation}
where $\curvearrowright$ denotes the action of the spin representation on
the Dirac information carrier, $\longmapsto$ denotes the passage from the
carrier to the traceless Hermitian endomorphisms generating computational
transformations, and
$\xrightarrow{\exp(-\text{i}\,\cdot)}$ denotes exponentiation to the
computational symmetry group~$\text{SU}(4)$.

The unitary group~$\text{SU}(4)$ therefore provides the mathematical
symmetry of quantum logic acting on the Dirac ququart. This symmetry,
however, describes the space of unitary transformations on the computational
Hilbert space~$\Hd$ rather than the relativistic covariance of the underlying
physical system. We therefore distinguish mathematical transformations from
their physical realization.

\begin{definition}
An \emph{admissible quantum logic gate} is a unitary transformation
on the Dirac information carrier that is physically implementable
under the stated physical assumptions for the realization under consideration.
\end{definition}
\noindent
Accordingly, not every element of~$\text{SU}(4)$ need correspond to an
admissible quantum logic gate.
Relative to the intrinsic decomposition~(\ref{eq:decomposition}),
admissible gates are classified as
\emph{sector-preserving}, acting independently on~$\Hp$ and~$\Hm$, or
\emph{sector-coupling}, coherently connecting the two sectors. This
classification is determined mathematically by the block structure of the
corresponding unitary operators, whereas the physical admissibility of these
transformations depends on additional physical principles considered below.

The physical admissibility of a sector-coupling transformation depends on the
realization of the Dirac information carrier.
At the level of the first-quantized Dirac equation, charge conjugation relates
the positive- and negative-energy solution structures. Its action on
c-number Dirac spinors involves complex conjugation and is therefore
antilinear, so it is not itself an element of the complex-linear
single-carrier group~$\text{SU}(4)$ considered here.
After second quantization,
charge conjugation is instead represented by a unitary operator on Fock space
that interchanges particle and antiparticle degrees of freedom~\cite{BjorkenDrell1965}.
This Fock-space symmetry should not be identified with arbitrary
single-carrier~$\text{SU}(4)$ control between the first-quantized energy sectors.

For a fundamental charged Dirac field, the particle--antiparticle
interpretation described above brings charge conservation and charge
superselection into the question of physical admissibility: coherent
transformations involving sectors of different charge can require an
appropriate reference-frame or frameness resource.
This operational viewpoint
is consistent with resource-theoretic treatments of charge and discrete
symmetries~\cite{GourSandersTurner2009,SkotiniotisTolouiDurhamSanders2013,SkotiniotisTolouiDurhamSanders2014}.
The required resources and their physical realization are platform dependent;
an effective quantum simulator of the Dirac structure, for example, need not
inherit the superselection restrictions of a fundamental charged field. This
Letter therefore distinguishes the mathematical classification of
sector-preserving and sector-coupling quantum logic from the
realization-dependent physical admissibility of those transformations.

Full controllability of a single Dirac information carrier is conventionally
formulated in terms of finite gate sets. Accordingly, we regard an admissible
gate set as a finite collection of admissible quantum logic gates acting on
the Dirac information carrier. Such a gate set is said to provide
\emph{full single-carrier controllability} if the subgroup that it generates
is dense in the target unitary group with respect to the operator-norm
topology~\cite{NielsenChuang2010,BarencoEtAl1995,BoykinEtAl2000}.
We say that \emph{sector-preserving controllability} is available when the
admissible sector-preserving gates generate a dense subgroup of the full
sector-preserving control group. Assuming sector-preserving controllability,
arbitrary single-qubit control is already available independently within the
positive- and negative-energy sectors. The essential question is therefore
under what additional conditions an admissible finite gate set achieves full
relativistic controllability of the Dirac information carrier. The following
theorem answers this question.

Relative to the decomposition~(\ref{eq:decomposition}), 
the sector-preserving subgroup is~\cite{Hall2015,Knapp2002}
\begin{equation}
\label{eq:sector-preserving subgroup}
G_{\text{sp}}
:=
S\!\left(\text{U}(2)\times\text{U}(2)\right)
\subset\text{SU}(4).
\end{equation}
The subgroup~$G_{\text{sp}}$ is connected, with Lie
algebra~$\mathfrak g_{\text{sp}}$ comprising the block-diagonal elements
of~$\mathfrak{su}(4)$, 
whereas the complementary subspace
\begin{equation}
\label{eq:complementarysubspace}
\mathfrak p
=
\left\{
P(Z):=
\begin{pmatrix}
0&Z\\
-Z^\dagger&0
\end{pmatrix}
:
Z\in M_2(\mathbb C)
\right\}
\end{equation}
comprises the off-diagonal elements, so that~\cite{Knapp2002}
\begin{equation}
\label{eq:su4gsp+p}
\mathfrak{su}(4)=\mathfrak g_{\text{sp}}\oplus\mathfrak p.
\end{equation}
\begin{lemma}
The subgroup~$G_{\text{sp}}$ is a maximal connected proper subgroup
of~$\text{SU}(4)$.
\end{lemma}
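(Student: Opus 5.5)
The plan is to reduce maximality of the connected subgroup $G_{\text{sp}}$ to maximality of its Lie algebra $\mathfrak g_{\text{sp}}$ among proper subalgebras of $\mathfrak{su}(4)$. Suppose $G_{\text{sp}}\subseteq K\subseteq\text{SU}(4)$ with $K$ a connected Lie subgroup. Then its Lie algebra $\mathfrak k$ satisfies $\mathfrak g_{\text{sp}}\subseteq\mathfrak k\subseteq\mathfrak{su}(4)$. Because connected Lie subgroups correspond bijectively to Lie subalgebras, it suffices to show that $\mathfrak k=\mathfrak g_{\text{sp}}$ or $\mathfrak k=\mathfrak{su}(4)$. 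Properness is immediate from the dimension count $\dim\mathfrak g_{\text{sp}}=3+3+1=7<15$ in the decomposition~\eqref{eq:su4gsp+p}.

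Using~\eqref{eq:su4gsp+p}, I would write $\mathfrak k=\mathfrak g_{\text{sp}}\oplus(\mathfrak k\cap\mathfrak p)$. This holds because $\mathfrak k\supseteq\mathfrak g_{\text{sp}}$, so projecting any element of $\mathfrak k$ onto $\mathfrak p$ stays in $\mathfrak k$. Since $[\mathfrak g_{\text{sp}},\mathfrak p]\subseteq\mathfrak p$, the space $\mathfrak k\cap\mathfrak p$ is an $\operatorname{ad}(\mathfrak g_{\text{sp}})$-invariant subspace of $\mathfrak p$.

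The key step, and the main obstacle, is to show that $\mathfrak p$ is irreducible under $G_{\text{sp}}$. The adjoint action of $\operatorname{diag}(A,B)\in G_{\text{sp}}$ sends $P(Z)\mapsto P(AZB^\dagger)$, so $\mathfrak p\cong M_2(\mathbb C)\cong\mathbb C^2\otimes\overline{\mathbb C^2}$, viewed as a real $8$-dimensional representation of $\text{U}(2)\times\text{U}(2)$. To prove irreducibility directly, I would take a nonzero invariant real subspace $W$ and a nonzero $Z\in W$. Using $A,B\in\text{SU}(2)$ (singular value decomposition), I can move $Z$ to a diagonal $\operatorname{diag}(s_1,s_2)$ with $s_i\ge0$. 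I would then use the central phase element $\operatorname{diag}(e^{\text{i}\phi}\mathbb I,e^{-\text{i}\phi}\mathbb I)\in G_{\text{sp}}$, which rotates $Z\mapsto e^{2\text{i}\phi}Z$, to obtain $\text{i}Z\in W$. This makes $W$ a complex subspace. It is then a complex subrepresentation of $\mathbb C^2\otimes\overline{\mathbb C^2}$ under $\text{SU}(2)\times\text{SU}(2)$, which is irreducible as an outer tensor product of irreducibles. Hence $W=\mathfrak p$.

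Given irreducibility, $\mathfrak k\cap\mathfrak p$ is either $0$, giving $\mathfrak k=\mathfrak g_{\text{sp}}$, or $\mathfrak p$, giving $\mathfrak k=\mathfrak{su}(4)$. Exponentiating then gives $K=G_{\text{sp}}$ or $K=\text{SU}(4)$, using connectedness of both $K$ and $\text{SU}(4)$. As a cross-check, I would note that $(\mathfrak{su}(4),\mathfrak g_{\text{sp}})$ is the symmetric pair of the Grassmannian $\text{SU}(4)/S(\text{U}(2)\times\text{U}(2))$, with $[\mathfrak p,\mathfrak p]\subseteq\mathfrak g_{\text{sp}}$. This is consistent with the standard fact that isotropy algebras of irreducible symmetric spaces are maximal.
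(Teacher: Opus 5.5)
Your proposal is correct and follows essentially the same route as the paper: the central phase $\operatorname{diag}(\text{e}^{\text{i}\phi}\mathbb I_2,\text{e}^{-\text{i}\phi}\mathbb I_2)$ makes every invariant real subspace of $\mathfrak p$ complex, irreducibility of $\mathbb C^2\otimes\overline{\mathbb C^2}$ under $\text{SU}(2)\times\text{SU}(2)$ then makes $\mathfrak p$ irreducible, and so any intermediate subalgebra $\mathfrak k$ must contain all of $\mathfrak p$. The differences are cosmetic: you get a nonzero element of $\mathfrak k\cap\mathfrak p$ directly as $X-D$ (since $D\in\mathfrak g_{\text{sp}}\subseteq\mathfrak k$), which is slightly cleaner than the paper's $[J,X]$, and your singular-value-decomposition step is never used and can be dropped.
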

\begin{proof}
For
\begin{equation}
\label{eq:gdiag}
g=\operatorname{diag}(U_+,U_-)\in G_{\text{sp}},
\end{equation}
direct block-matrix multiplication gives the adjoint action, i.e., conjugation, on
$\mathfrak p$ is
\begin{equation}
\label{eq:adjaction}
P(Z)\longmapsto P(U_+ZU_-^\dagger).
\end{equation}
The central one-parameter subgroup
\begin{equation}
C(t)=
\operatorname{diag}
\!\left(
\text{e}^{\text{i}t}\mathbb I_2,
\text{e}^{-\text{i}t}\mathbb I_2
\right)
\end{equation}
acts as~$Z\mapsto \text{e}^{2\text{i}t}Z$. Hence every real
$G_{\text{sp}}$-invariant subspace of~$\mathfrak p$ is also complex.
As a complex representation of
$\text{SU}(2)\times\text{SU}(2)$,
\begin{equation}
M_2(\mathbb C)
\cong
\mathbb C^2\otimes(\mathbb C^2)^*
\end{equation}
is irreducible, so~$\mathfrak p$ is irreducible as a real
$G_{\text{sp}}$-module.

Now let
\begin{equation}
\mathfrak g_{\text{sp}}\subsetneq\mathfrak k\subseteq\mathfrak{su}(4).
\end{equation}
Exploiting the fact that every element has a unique decomposition~(\ref{eq:su4gsp+p}),
choose
\begin{equation}
\label{eq:XDP}
X=D+P\in\mathfrak k\setminus\mathfrak g_{\text{sp}},\,
D\in\mathfrak g_{\text{sp}}
\end{equation}
and nonzero~$P\in\mathfrak p$, and set
\begin{equation}
J=
\text{i}\operatorname{diag}(\mathbb I_2,-\mathbb I_2)
\in\mathfrak g_{\text{sp}}.
\end{equation}
Then~$[J,X]=[J,P]$ is a nonzero element of
$\mathfrak k\cap\mathfrak p$, because~$\operatorname{ad}_J$ is invertible
on~$\mathfrak p$. Irreducibility therefore gives
$\mathfrak p\subseteq\mathfrak k$, whence
$\mathfrak k=\mathfrak{su}(4)$. Thus no connected proper subgroup lies
strictly between~$G_{\text{sp}}$ and~$\text{SU}(4)$
~\cite{Dynkin1952,Knapp2002,Hall2015}.
\end{proof}

Let~$\mathcal S_{\text{sp}}$ be a finite admissible sector-preserving gate
set satisfying
\begin{equation}
\overline{\left\langle\mathcal S_{\text{sp}}\right\rangle}
=
G_{\text{sp}}.
\end{equation}
This assumption includes complete control within each sector and control of
the relative-sector phase. We use controllability in the standard quantum-control sense of generating a
dense subgroup of the target unitary group~\cite{DAlessandro2008}.
A sector-coupling gate need not enlarge this
closure: for example, a gate could exchange~$\Hp$ and~$\Hm$ while preserving
their decomposition as an unordered pair. The relevant criterion is therefore
membership in the normalizer
\begin{equation}
N_{\text{SU}(4)}\!\left(G_{\text{sp}}\right)
:=
\left\{
V\in\text{SU}(4):
VG_{\text{sp}}V^\dagger=G_{\text{sp}}
\right\}.
\end{equation}
The normalizer is a proper closed subgroup whose identity component is
$G_{\text{sp}}$; its other component exchanges~$\Hp$ and~$\Hm$.
Indeed, an element normalizing $G_{\mathrm{sp}}$ must preserve the two
two-dimensional sectors or exchange them as an unordered pair.

\begin{theorem}[Full Relativistic Controllability Theorem]
Let~$\mathcal S_{\text{sp}}$ be a finite admissible sector-preserving gate
set whose generated closure is~$G_{\text{sp}}$, and let~$U_{\text{c}}$ be
an additional admissible quantum logic gate.
Then
$\mathcal S_{\text{sp}}\cup\{U_{\text{c}}\}$ provides full single-carrier
controllability of the Dirac information carrier if and only if
\begin{equation}
U_{\text{c}}
\notin
N_{\text{SU}(4)}\!\left(G_{\text{sp}}\right).
\end{equation}
Equivalently,
\begin{equation}
\overline{
\left\langle
\mathcal S_{\text{sp}},U_{\text{c}}
\right\rangle}
=
\text{SU}(4)
\quad\Longleftrightarrow\quad
U_{\text{c}}
\notin
N_{\text{SU}(4)}\!\left(G_{\text{sp}}\right).
\end{equation}
\end{theorem}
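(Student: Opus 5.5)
We prove the two directions separately, working throughout with the closed subgroup $K:=\overline{\langle\mathcal S_{\text{sp}},U_{\text{c}}\rangle}\subseteq\text{SU}(4)$. By Cartan's closed-subgroup theorem, $K$ is a compact Lie group. By hypothesis $G_{\text{sp}}\subseteq K$.

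For the forward implication we argue by contraposition. Suppose $U_{\text{c}}\in N_{\text{SU}(4)}(G_{\text{sp}})$. Every element of $\mathcal S_{\text{sp}}$ lies in $G_{\text{sp}}$, and hence also in the normalizer. The normalizer is a group, so the whole generated subgroup lies in it. The normalizer of a closed subgroup is closed, so $K\subseteq N_{\text{SU}(4)}(G_{\text{sp}})$. As stated before the theorem, this normalizer is a proper subgroup: its identity component is $G_{\text{sp}}$, which has dimension $7<15$. Therefore $K\neq\text{SU}(4)$, and full controllability fails.

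For the reverse implication, assume $U_{\text{c}}\notin N_{\text{SU}(4)}(G_{\text{sp}})$.

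\begin{itemize}
\item[(i)] Let $K_0$ be the identity component of $K$. Since $G_{\text{sp}}$ is connected and contains the identity, $G_{\text{sp}}\subseteq K_0$. Also, $K_0$ is a connected closed subgroup of $\text{SU}(4)$.
\item[(ii)] By the Lemma, either $K_0=G_{\text{sp}}$ or $K_0=\text{SU}(4)$. At the Lie algebra level, the proof of the Lemma shows there is no intermediate subalgebra between $\mathfrak g_{\text{sp}}$ and $\mathfrak{su}(4)$. Apply this to $\mathfrak k=\operatorname{Lie}(K_0)\supseteq\mathfrak g_{\text{sp}}$.
\item[(iii)] Exclude $K_0=G_{\text{sp}}$. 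The identity component of any Lie group is a normal subgroup, so $K_0$ is normalized by every element of $K$, in particular by $U_{\text{c}}$. If $K_0=G_{\text{sp}}$, this would give $U_{\text{c}}G_{\text{sp}}U_{\text{c}}^\dagger=G_{\text{sp}}$, contradicting the assumption.
\item[(iv)] Hence $K_0=\text{SU}(4)$, and since $K_0\subseteq K$, we get $K=\text{SU}(4)$. Density of the generated subgroup in the operator-norm topology is exactly the statement that its closure is $\text{SU}(4)$, which is the stated equivalence.
\end{itemize}

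The only delicate step is (ii). The Lemma is phrased for connected subgroups, but we really use the Lie-algebra statement proved inside it, together with the correspondence between connected Lie subgroups and subalgebras. This correspondence is legitimate here because $K_0$ is closed, hence an embedded Lie subgroup with Lie algebra $\mathfrak k$. If $\mathfrak k=\mathfrak g_{\text{sp}}$, then $K_0$ is the connected subgroup with that algebra, namely $G_{\text{sp}}$. If $\mathfrak k=\mathfrak{su}(4)$, then $K_0$ is open in the connected group $\text{SU}(4)$ and therefore equal to it.

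The forward direction uses only the normalizer facts already recorded before the theorem. A minor point is that properness of the normalizer follows from its dimension, so its exact component structure is not needed.
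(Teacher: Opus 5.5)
Your proposal is correct and follows essentially the same route as the paper. You use the Lemma to force $K^\circ\in\{G_{\text{sp}},\text{SU}(4)\}$, rule out $K^\circ=G_{\text{sp}}$ because the identity component is normal in $K$ and $U_{\text{c}}\in K$, and for the converse place the generated subgroup inside the closed proper normalizer; your extra care in step (ii) (the Lie-algebra statement inside the Lemma plus the closed-subgroup correspondence) only makes explicit what the paper's appeal to maximality leaves implicit.
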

\begin{proof}
Let
\begin{equation}
K:=
\overline{
\left\langle
\mathcal S_{\text{sp}},U_{\text{c}}
\right\rangle}.
\end{equation}
Let $K^\circ$ denote the connected component of $K$ containing the identity.
Since
$G_{\mathrm{sp}}
=\overline{\langle\mathcal S_{\mathrm{sp}}\rangle}\subseteq K$
is connected and contains the identity, we have
$G_{\mathrm{sp}}\subseteq K^\circ$.
The maximality of
$G_{\mathrm{sp}}$ among connected proper subgroups of~$\text{SU}(4)$
therefore implies
$K^\circ=G_{\mathrm{sp}}$ or~$K^\circ=\text{SU}(4)$.

If
$U_{\text{c}}\notin N_{\text{SU}(4)}(G_{\mathrm{sp}})$, then
$K^\circ\neq G_{\mathrm{sp}}$: otherwise, as the identity component
$K^\circ$ is normal in~$K$ and~$U_{\text{c}}\in K$, we would have
$U_{\text{c}}G_{\mathrm{sp}}U_{\text{c}}^\dagger=G_{\mathrm{sp}}$,
contradicting the assumption. Hence~$K^\circ=\text{SU}(4)$ and therefore
$K=\text{SU}(4)$.
Conversely, if
$U_{\text{c}}\in N_{\text{SU}(4)}(G_{\text{sp}})$, then every element
generated by~$\mathcal S_{\text{sp}}\cup\{U_{\text{c}}\}$ lies in this
normalizer.
As
$N_{\text{SU}(4)}(G_{\mathrm{sp}})$ is a proper closed subgroup
of~$\text{SU}(4)$~\cite{Knapp2002},
the generated closure cannot equal~$\text{SU}(4)$.
\end{proof}
\begin{corollary}
Full relativistic controllability requires an admissible quantum
logic gate outside the normalizer of the sector-preserving subgroup. Thus,
preserving the positive- and negative-energy sectors, or merely exchanging
them, is insufficient to access the full computational symmetry of the Dirac
information carrier.
\end{corollary}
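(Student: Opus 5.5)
The corollary should follow directly from the Full Relativistic Controllability Theorem, together with a short identification of which gates lie in the normalizer $N_{\text{SU}(4)}(G_{\text{sp}})$. The plan has two steps. First, I will restate the theorem's ``only if'' direction as a necessity statement. Second, I will interpret the normalizer physically as exactly the sector-preserving and sector-exchanging unitaries.

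For the first step, I will use the theorem's setting: $\mathcal S_{\text{sp}}$ generates $G_{\text{sp}}$ densely, and $U_{\text{c}}$ is an additional admissible gate. The theorem gives $\overline{\langle \mathcal S_{\text{sp}},U_{\text{c}}\rangle}=\text{SU}(4)$ only if $U_{\text{c}}\notin N_{\text{SU}(4)}(G_{\text{sp}})$. The more general situation is an arbitrary finite admissible set $\mathcal S\supseteq\mathcal S_{\text{sp}}$. If every element of $\mathcal S$ lies in the normalizer, then $\langle\mathcal S\rangle$ lies in the normalizer, because the normalizer is a group. Its closure then lies in the normalizer too, because the normalizer is closed, being defined by the closed condition $VG_{\text{sp}}V^\dagger=G_{\text{sp}}$ with $G_{\text{sp}}$ compact. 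Since the normalizer is proper, as stated before the theorem, full controllability would fail. Hence at least one admissible gate outside the normalizer is required. This is the converse half of the theorem's proof, applied to a set rather than a single gate.

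For the second step, I will characterize the normalizer. Let $V\in N_{\text{SU}(4)}(G_{\text{sp}})$. Conjugation by $V$ preserves $G_{\text{sp}}$, so it preserves the centre of $G_{\text{sp}}$, which is the circle $C(t)$ from the Lemma. The generator $J$ of that circle therefore satisfies $VJV^\dagger=\pm J$, since these are the only elements of the centre's Lie algebra with the same spectrum as $J$. Because the eigenspaces of $J$ are $\Hp$ and $\Hm$, it follows that $V$ either maps each of $\Hp$ and $\Hm$ to itself, or interchanges them. Conversely, any such $V$ normalizes $G_{\text{sp}}$. So the normalizer consists exactly of the sector-preserving unitaries, which form $G_{\text{sp}}$, and the sector-exchanging unitaries. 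Combining this with the first step gives the ``thus'' clause of the corollary: gates that preserve or merely exchange the two sectors cannot generate $\text{SU}(4)$.

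The only point needing care is the spectral argument that $VJV^\dagger=\pm J$. I expect it to be straightforward, because conjugation preserves eigenvalues and the centre's Lie algebra is one-dimensional, spanned by $J$.
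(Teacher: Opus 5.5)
Your proposal is correct and matches how the paper obtains the corollary. The corollary is the ``only if'' direction of the Full Relativistic Controllability Theorem, and you rightly note that this converse argument applies to any gate set confined to the closed proper normalizer; this is combined with the paper's remark that normalizing elements preserve or exchange $\Hp$ and $\Hm$, which the paper asserts without proof and you justify via $\operatorname{Ad}_V$ preserving the centre $\mathbb R J$ of $\mathfrak g_{\text{sp}}$. One small slip does not affect the argument: the centre of $G_{\text{sp}}$ is $C(t)\cup\text{i}\,C(t)$ rather than the circle $C(t)$ alone (for example $\text{i}\mathbb I_4\in G_{\text{sp}}$), but only its Lie algebra $\mathbb R J$ enters your spectral step.
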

\noindent
For multiple Dirac information carriers, scalable universal quantum
computation would additionally require an appropriate admissible entangling
interaction between carriers, as in the standard theory of universal quantum
computation~\cite{BarencoEtAl1995}.

The Full Relativistic Controllability Theorem identifies the precise
mathematical condition under which the computational structure of the
Dirac information carrier is fully accessible.
The intrinsic decomposition
into positive- and negative-energy sectors determines a sector-preserving
computational structure,
but does not by itself restrict quantum logic to
that subgroup.
Whether transformations beyond it are physically realizable as quantum logic gates depends on admissibility:
sector-coupling gates require operational resources
capable of establishing the corresponding coherence. Full relativistic
controllability therefore depends not merely on the dimensionality of the
Dirac information carrier, but on which transformations permitted by its
computational symmetry can be realized as quantum logic gates.

The construction summarized by Eq.~\eqref{eq:covariance-to-logic} clarifies the
logical relationship between relativistic quantum mechanics and quantum
computation.
The passage from relativistic quantum mechanics to quantum logic
proceeds through the traceless Hermitian endomorphisms acting on the Dirac
information carrier rather than through the relativistic covariance group itself.
Relativistic covariance determines how the Dirac information carrier
transforms between inertial reference frames, whereas quantum logic is
constructed from the traceless Hermitian endomorphisms acting upon that
carrier.
The computational symmetry therefore does not replace or extend the
relativistic symmetry; instead, it is constructed from a distinct mathematical
structure acting on the same underlying information carrier.
This separation
of relativistic covariance, computational structure, and admissible quantum
logic provides the conceptual foundation for relativistic quantum computation
developed here.

Our framework clarifies the status of the conventional Pauli
qubit within relativistic quantum computation. Restriction to the
positive-energy sector reduces the Dirac ququart to a two-dimensional
information carrier; in the nonrelativistic regime, this sector reduces to
the familiar Pauli spinor and hence the conventional spin qubit
description~\cite{FoldyWouthuysen1950}, explaining its effectiveness for the
vast majority of existing quantum information
processing~\cite{NielsenChuang2010}.
The relativistic
framework developed here does not replace this description; rather, it embeds
it within a more general computational structure whose intrinsic
positive- and negative-energy decomposition is inherited directly from the
Dirac equation.
For the massive spin-$\nicefrac12$ system considered here, relativistic quantum
computation is therefore obtained not by modifying the conventional qubit, but
by deriving the computational structure from the underlying relativistic
theory and exploiting the full structure of the resulting Dirac information
carrier when physically admissible.

The framework developed here naturally suggests several directions for future
research.
First, the operational characterization of admissible quantum logic
gates motivates a resource-theoretic formulation of relativistic quantum
computation in which access to sector coherence is characterized by the
reference-frame resources required for its realization.
Second, the distinction between computational symmetry and physical
admissibility provides a foundation for investigating relativistic fault tolerance and error
correction~\cite{LidarBrun2013}.
Third, deriving computational structures directly from fundamental physical
theories might lead to new classes of relativistic quantum algorithms. 
More generally,
the appropriate relativistic information carrier should be
determined by the mass, spin, and corresponding Poincaré representation of the
underlying physical system~\cite{Wigner1939}.
The Dirac carrier studied here is the massive
spin-$\nicefrac12$ instance of this broader programme; higher-spin systems
should lead to distinct information carriers and computational structures
once their relativistic representation structure and physical constraints are
taken into account~\cite{BargmannWigner1948}. This perspective suggests that
computational structures should be derived from the intrinsic
information-bearing structures supplied by fundamental physical theories
rather than imposed independently of them.

\begin{acknowledgments}
\paragraph{Acknowledgments:}
This work was supported by the Natural Sciences and Engineering Research
Council of Canada (NSERC) Discovery Grant \emph{High-dimensional quantum
computing}.
I thank Aninda Sinha for insightful comments on an earlier version of this
manuscript that prompted a substantial clarification of its mathematical
structure, Carlo Maria Scandolo for valuable comments that led to further
clarification of its mathematical and operational formulation, and Hubert de
Guise for perceptive comments that sharpened the physical motivation, scope,
and representation-theoretic framing of the work.
Assistance with manuscript development and editing was provided by ChatGPT
(OpenAI GPT-5.6 Sol).
Responsibility for all scientific content rests
entirely with the author, who independently verified every mathematical and
scientific statement. This work was conducted on the traditional Treaty~7
territory in Southern Alberta.
\end{acknowledgments}
\bibliography{references}
\end{document}